\documentclass{article}

\usepackage{amsmath,amssymb,amsthm,mathtools}
\usepackage{geometry}
\newtheorem{theorem}{Theorem}
\newtheorem{proposition}{Proposition}
\newtheorem{assumption}{Assumption}
\newtheorem{remark}{Remark}

\title{Ostrom-Weighted Bootstrap:\\
A Theoretically Optimal and Provably Complete Framework\\
for Hierarchical Imputation in Multi-Agent Systems}

\author{Hirofumi Wakimoto}
\date{December 1, 2025}

\begin{document}

\maketitle

\begin{abstract}
We study the statistical properties of the \emph{Ostrom-Weighted Bootstrap}
(OWB), a hierarchical, variance-aware resampling scheme for imputing missing
values and estimating archetypes in multi-agent voting data.
At Level~1, under mild linear model assumptions, the \emph{ideal} OWB
estimator---with known persona-level (agent-level) variances---is shown to be the
Gauss--Markov best linear unbiased estimator (BLUE) and to strictly dominate
uniform weighting whenever persona variances differ.
At Level~2, within a canonical hierarchical normal model, the ideal OWB
coincides with the conditional Bayesian posterior mean of the archetype.
We then analyze the \emph{feasible} OWB, which replaces unknown variances with
hierarchically pooled empirical estimates, and show that it can be interpreted
as both a feasible generalized least-squares (FGLS) and an empirical-Bayes
shrinkage estimator with asymptotically valid weighted bootstrap confidence
intervals under mild regularity conditions.
Finally, we establish a Zero-NaN Guarantee: as long as each petal has at least
one finite observation, the OWB imputation algorithm produces strictly
NaN-free completed data using only explicit, non-uniform bootstrap weights and
never resorting to uniform sampling or numerical zero-filling.

To our knowledge, OWB is the first resampling-based method that simultaneously
achieves exact BLUE optimality, conditional Bayesian posterior mean
interpretation, empirical Bayes shrinkage of precision parameters, asymptotic
efficiency via FGLS, consistent weighted bootstrap inference, and provable
zero-NaN completion under minimal data assumptions.
\end{abstract}

\section{Introduction}

Large-scale voting and preference data from multiple agents (``personas'')
arise in many modern applications, including recommender systems, crowd
forecasting, participatory budgeting, and multi-agent decision-making.
In such settings, we often observe repeated \emph{rounds} of responses
from each persona across a fixed set of \emph{petals} (questions or items),
with potentially substantial patterns of missingness.
Throughout, we use ``persona'' to denote an individual agent (panel unit) and
``petal'' to denote a question or item (coordinate index $j$ in the archetype).
A central inferential task is to estimate a latent \emph{archetype} profile
across petals, while also imputing missing entries in a way that respects
the heterogeneous reliability and sampling variability of personas.

Standard approaches based on uniform weighting, ad hoc averaging, or
non-hierarchical bootstrap resampling are ill-suited to this task.
They ignore the fact that different personas may have vastly different
numbers of observations and noise levels, and they often lack explicit
guarantees about bias, efficiency, or the behavior of imputation under
extreme patterns of missingness.
Moreover, many practical implementations silently replace missing values
by zeros or by global means, introducing uncontrolled bias and obscuring
the connection between theory and code.

In this paper, we analyze and formalize the \emph{Ostrom-Weighted Bootstrap}
(OWB), a hierarchical, variance-aware bootstrap mechanism designed to address
these shortcomings.
OWB operates at the persona level, constructing scalar variance proxies from
persona-wise covariance matrices, performing empirical-Bayes style pooling
across clusters and globally, and using the resulting precision weights both
for point estimation of the archetype and for imputing missing entries via a
strictly non-uniform weighted bootstrap.

\paragraph{Naming and conceptual motivation.}
The term ``Ostrom-Weighted Bootstrap'' is chosen as an explicit homage
to the work of Elinor Ostrom on polycentric governance and the
management of common-pool resources.
While the present paper is purely statistical, the hierarchical
weighting structure of OWB---aggregating information across personas,
clusters, and global levels---is conceptually aligned with Ostrom's
emphasis on multi-layer, trust-based coordination among local actors.
No endorsement by Ostrom or her estate is implied; the terminology is
used solely to acknowledge this conceptual inspiration.
This terminology not only honors Ostrom's ideas but also evokes the
multi-faceted nature of the data: personas as autonomous agents in a system,
and petals as individual ``facets'' or questions branching from a common
core (the archetype).

Our main contributions are threefold:
\begin{itemize}
  \item At \emph{Level~1}, we show that the ideal OWB estimator---assuming
  persona-level variances are known---is the Gauss--Markov BLUE among all
  linear unbiased estimators based on persona means and strictly dominates
  uniform weighting whenever persona variances differ.
  \item At \emph{Level~2}, under a canonical hierarchical normal model, we
  show that the ideal OWB coincides with the conditional posterior mean of
  the archetype, while the feasible OWB with hierarchically pooled variance
  estimates can be viewed as a feasible generalized least-squares and
  empirical-Bayes shrinkage estimator.
  We further establish consistency of a weighted bootstrap scheme based on
  the OWB weights, providing asymptotically valid uncertainty quantification
  under mild regularity conditions.
  \item At the \emph{implementation level}, we formalize a Zero-NaN Guarantee:
  as long as each petal has at least one finite observation, the OWB
  imputation algorithm yields a completed data matrix with no NaN entries,
  using only explicit, non-uniform bootstrap weights and never relying on
  uniform sampling (\texttt{p=None}) or numerical zero-filling
  (e.g.\ \texttt{np.nan\_to\_num}).
\end{itemize}

The rest of the paper is organized as follows.
Section~\ref{sec:related} reviews related work.
Section~\ref{sec:model-setup} introduces the model setup and notation.
Section~\ref{sec:level1} presents the Level~1 linear-model results, including
GLS optimality and shrinkage via hierarchical pooling.
Section~\ref{sec:level2} develops the Level~2 hierarchical normal theory,
covering both the ideal and feasible OWB estimators and their bootstrap
properties.
Section~\ref{sec:zero-nan} states and proves the Zero-NaN Guarantee for the
imputation algorithm and explains how the minimal-data assumption
is enforced in practice.

Finally, we note that parts of the exposition were initially drafted with the
assistance of large language models (ChatGPT-5.1 Thinking and Grok 4.1); see
the Acknowledgments for details. The final statements and proofs were checked
and validated by the author.

\section{Related Work}
\label{sec:related}

Our work on the Ostrom-Weighted Bootstrap (OWB) is connected to several
strands of literature in missing data imputation, hierarchical modeling,
bootstrap methods, and survey inference with heterogeneous reliability.

For handling missing data in panel or multivariate settings, multiple
imputation (MI) is a cornerstone approach.
Rubin~\cite{rubin1987} formalized MI as a method to create multiple completed
datasets by drawing from the posterior predictive distribution of missing
values, explicitly accounting for imputation uncertainty.
Extensions to time-series cross-sectional data, such as the Amelia package
of Honaker et al.~\cite{honaker2011}, use expectation-maximization with
bootstrapping to impute under multivariate normality assumptions.
Similarly, the \texttt{mice} package~\cite{vanbuuren2011} employs multivariate
imputation by chained equations, iteratively regressing each variable on the
others to accommodate flexible missingness patterns.
While these methods are highly effective for general-purpose imputation,
they typically do not incorporate hierarchical, variance-aware weighting at
the resampling stage, and they are not designed around a Zero-NaN-style
guarantee under minimal data assumptions, which is a central focus of OWB.

In the Bayesian domain, hierarchical models provide a natural framework for
imputation in multi-level data.
Gelman et al.~\cite{gelman2013} advocate Bayesian approaches that combine
hierarchical priors with MI to borrow strength across groups.
The \texttt{brms} package~\cite{burkner2017} makes it practical to fit rich
Bayesian multilevel regression models in \textsf{R} using Stan, enabling
shrinkage and partial pooling for variance components.
These methods are conceptually related to our Level~2 view, in which OWB
admits a conditional posterior mean interpretation under a canonical
hierarchical normal model.
In OWB, we combine this perspective with a weighted bootstrap used both for
estimation and for imputation, based on non-uniform precision weights derived
from hierarchically pooled variance proxies rather than ad hoc filling rules.

On the bootstrap side, weighted and exchangeably weighted bootstraps generalize
Efron's original method to settings with unequal probabilities or dependence.
Praestgaard and Wellner~\cite{praestgaard1993} established conditions for the
consistency of exchangeably weighted bootstraps for empirical processes, while
Barbe and Bertail~\cite{barbe1995} provided a comprehensive treatment of the
weighted bootstrap, including its asymptotic theory.
OWB can be viewed as extending this line of work by applying hierarchical
precision weights derived from pooled variances to a multi-agent voting
context, with an explicit implementation that enforces a finite-depth
hierarchical fallback chain for imputation.

Finally, in survey aggregation and reliability weighting, multilevel regression
and poststratification (MRP) adjusts for non-representative samples by modeling
subgroup effects and reweighting to population margins.
Wang et al.~\cite{wang2015} demonstrated MRP's efficacy in forecasting
elections from biased Xbox polls, achieving competitive accuracy via
hierarchical modeling and poststratification.
More recently, Si et al.~\cite{si2020} proposed Bayesian hierarchical weighting
for survey inference, incorporating uncertainty in adjustment factors through
a multilevel model.
OWB shares the emphasis on reliability-based weighting but focuses on
imputation for voting data: bootstrap weights are used for both archetype
estimation and NaN-free completion, rather than for poststratification alone.

In summary, OWB brings together elements from these areas into a
resampling-based framework tailored to multi-agent voting systems, with an
emphasis on variance-aware weighting and NaN-free imputation guarantees under
clear minimal-data assumptions.

\section{Model Setup and Notation}
\label{sec:model-setup}

We consider $N$ personas $p = 1,\dots,N$, rounds $r = 1,\dots,n_p$, and petals
(questions) $j = 1,\dots,P$.
Let $\beta_{p,r,j}$ denote the observed vote of persona $p$ in round $r$ on
petal $j$.

The persona-level petal mean is
\[
  \bar{\beta}_{p,j}
  = \frac{1}{n_p} \sum_{r=1}^{n_p} \beta_{p,r,j},
  \qquad
  \bar{\boldsymbol{\beta}}_p
  = (\bar{\beta}_{p,1}, \dots, \bar{\beta}_{p,P})^\top \in \mathbb{R}^P.
\]

In the Level~2 (hierarchical) view, we assume a canonical hierarchical linear model
\begin{equation}
  \beta_{p,r,j}
  = \mu_j + \alpha_p + \gamma_{c(p)} + \varepsilon_{p,r,j},
  \label{eq:hierarchical-model}
\end{equation}
where $c(p)$ denotes the cluster of persona $p$ and
\[
  \alpha_p \sim \mathcal{N}(0,\sigma_\alpha^2),
  \qquad
  \gamma_c \sim \mathcal{N}(0,\sigma_\gamma^2),
  \qquad
  \varepsilon_{p,r,j} \sim \mathcal{N}(0,\sigma^2),
\]
mutually independent.
The quantity of interest is the \emph{archetype} vector
$\mu = (\mu_1,\dots,\mu_P)^\top$.

We define the persona-level trace-variance proxy
\[
  v_p \coloneqq \frac{1}{P} \operatorname{tr}\bigl( \Sigma_p \bigr),
  \quad
  \Sigma_p \coloneqq \operatorname{Cov}(\bar{\boldsymbol{\beta}}_p),
\]
so that, for each petal $j$,
\[
  \operatorname{Var}(\bar{\beta}_{p,j}) \approx c_j\, v_p,
  \qquad c_j > 0.
\]

In practice we work with empirical estimates $\widehat{v}_p$ and their
hierarchically pooled counterparts $\widehat{v}_{\mathrm{eff},p}$ obtained
from persona, cluster, and global trace-variance pooling.

\section{Level 1 Theorems: Finite-Sample Linear Optimality}
\label{sec:level1}

\begin{assumption}[Level 1 Linear Model]
\label{ass:level1}
For each fixed petal $j$ we assume:
\begin{enumerate}
  \item Unbiased persona means:
  \[
    \mathbb{E}[\bar{\beta}_{p,j}] = \mu_j
    \quad \text{for all } p.
  \]
  \item Independence across personas:
  \[
    \operatorname{Cov}(\bar{\beta}_{p,j}, \bar{\beta}_{q,j}) = 0
    \quad \text{for all } p \neq q.
  \]
  \item Variance structure:
  \[
    \operatorname{Var}(\bar{\beta}_{p,j}) = \sigma_{p,j}^2
    = c_j\, v_{\mathrm{eff},p},
    \quad c_j > 0,
  \]
  where $v_{\mathrm{eff},p} > 0$ are persona-specific scalar variances.
\end{enumerate}
\end{assumption}

\begin{theorem}[Level 1: GLS Optimality and Dominance over Uniform Weights]
\label{thm:level1-gls}
Fix a petal $j$ and consider linear estimators of the form
\[
  \widetilde{\mu}_j = \sum_{p=1}^N w_p \bar{\beta}_{p,j},
  \qquad
  \sum_{p=1}^N w_p = 1.
\]
Under Assumption~\ref{ass:level1}, the ideal OWB estimator
\[
  \widehat{\mu}_j^{\mathrm{OWB(ideal)}}
  =
  \sum_{p=1}^N w_p^\ast\, \bar{\beta}_{p,j},
  \qquad
  w_p^\ast
  =
  \frac{v_p^{-1}}{\sum_{q=1}^N v_q^{-1}},
\]
where we set $v_{\mathrm{eff},p} = v_p$, it satisfies:
\begin{enumerate}
  \item (Gauss--Markov BLUE)
  \[
    \operatorname{Var}\bigl( \widehat{\mu}_j^{\mathrm{OWB(ideal)}} \bigr)
    =
    \min_{\{w_p : \sum_p w_p = 1\}}
    \operatorname{Var}\Bigl( \sum_{p=1}^N w_p \bar{\beta}_{p,j} \Bigr).
  \]
  \item (Dominance over uniform weighting)
  Let
  \[
    \widehat{\mu}_j^{\mathrm{uni}}
    = \frac{1}{N} \sum_{p=1}^N \bar{\beta}_{p,j}
  \]
  be the uniformly weighted estimator. Then
  \[
    \operatorname{Var}\bigl( \widehat{\mu}_j^{\mathrm{OWB(ideal)}} \bigr)
    \;\le\;
    \operatorname{Var}\bigl( \widehat{\mu}_j^{\mathrm{uni}} \bigr),
  \]
  with equality if and only if $v_{\mathrm{eff},1} = \dots = v_{\mathrm{eff},N}$.
\end{enumerate}
\end{theorem}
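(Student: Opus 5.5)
The argument is a direct Gauss--Markov computation for a single petal, carried out in the notation of Assumption~\ref{ass:level1} with $v_{\mathrm{eff},p}=v_p$. By items~2 and~3 of the assumption, any weight vector $w$ with $\sum_p w_p=1$ gives an estimator with variance
\[
  \operatorname{Var}\Bigl(\sum_{p=1}^N w_p\bar\beta_{p,j}\Bigr)=c_j\sum_{p=1}^N w_p^2 v_p .
\]
By item~1 every such estimator is unbiased for $\mu_j$. So part~1 reduces to minimizing the quadratic $Q(w)=\sum_p w_p^2 v_p$ over the affine hyperplane $\sum_p w_p=1$; the constant $c_j>0$ plays no role in the minimizer.

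For the minimization I would use Cauchy--Schwarz rather than Lagrange multipliers, since it also gives the equality case directly. Write $1=\sum_p w_p=\sum_p (w_p v_p^{1/2})(v_p^{-1/2})$. Then
\[
  1\le \Bigl(\sum_p w_p^2 v_p\Bigr)\Bigl(\sum_p v_p^{-1}\Bigr),
\]
so $Q(w)\ge \bigl(\sum_q v_q^{-1}\bigr)^{-1}$. Equality holds iff $w_p v_p^{1/2}\propto v_p^{-1/2}$, that is, iff $w_p\propto v_p^{-1}$. Normalizing gives exactly $w_p^\ast$, and substituting yields $\operatorname{Var}(\widehat\mu_j^{\mathrm{OWB(ideal)}})=c_j/\sum_q v_q^{-1}$. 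This proves part~1 and shows that $w^\ast$ is the unique minimizer, because $Q$ is strictly convex when all $v_p>0$.

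For part~2, the uniform weights $w_p=1/N$ lie in the feasible set, so the inequality follows immediately from part~1. Explicitly, the claim is
\[
  \frac{c_j}{\sum_q v_q^{-1}}\le \frac{c_j}{N^2}\sum_p v_p ,
\]
which is the AM--HM inequality $N^2\le(\sum_p v_p)(\sum_p v_p^{-1})$. For the equality case I would use uniqueness of the minimizer: equality holds iff the uniform weights coincide with $w^\ast$. That happens iff all $v_p^{-1}$ are equal, i.e.\ iff $v_{\mathrm{eff},1}=\dots=v_{\mathrm{eff},N}$. Equivalently, it is the Cauchy--Schwarz equality condition in AM--HM.

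There is no real obstacle. The only point needing care is the ``if and only if'' in part~2, which rests on strict convexity (uniqueness of the minimizer). That in turn requires $v_p>0$ for all $p$ and $c_j>0$, both of which the assumption guarantees. I would also note explicitly that the identification $v_{\mathrm{eff},p}=v_p$ is what makes $w^\ast$ the inverse-variance weights for petal $j$: the factor $c_j$ is common to all personas, so it cancels in the normalization. This is why the same OWB weights are simultaneously optimal for every petal.
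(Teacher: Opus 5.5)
Your proof is correct and follows essentially the same route as the paper: you reduce part~1 to minimizing $c_j\sum_p w_p^2 v_p$ over $\sum_p w_p = 1$, and you get part~2 from the AM--HM bound $(\sum_p v_p)(\sum_p v_p^{-1})\ge N^2$. The only difference is that you use Cauchy--Schwarz where the paper uses a Lagrange multiplier argument; this gives the global lower bound $(\sum_q v_q^{-1})^{-1}$ and uniqueness of $w^\ast$ in one step, and hence a cleaner equality case in part~2 than the paper's sketch, which attaches the ``all $v_{\mathrm{eff},p}$ equal'' condition loosely to both inequalities.
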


\begin{proof}[Proof sketch]
Under Assumption~\ref{ass:level1}, for fixed $j$ we have
$\bar{\beta}_{p,j} = \mu_j + \eta_{p,j}$ with
$\mathbb{E}[\eta_{p,j}] = 0$ and
$\operatorname{Var}(\eta_{p,j}) = c_j v_{\mathrm{eff},p}$,
independent across $p$.
For any weights with $\sum_p w_p = 1$,
\[
  \operatorname{Var}\Bigl( \sum_p w_p \bar{\beta}_{p,j} \Bigr)
  =
  \sum_p w_p^2 \operatorname{Var}(\bar{\beta}_{p,j})
  =
  c_j \sum_p w_p^2 v_{\mathrm{eff},p}.
\]
Minimizing $\sum_p w_p^2 v_{\mathrm{eff},p}$ under the linear constraint
$\sum_p w_p = 1$ yields $w_p \propto v_{\mathrm{eff},p}^{-1}$ by
a standard Lagrange multiplier argument, hence the Gauss--Markov claim.

For the comparison with uniform weights, note that
\[
  \sum_p w_p^2 v_{\mathrm{eff},p}
  \geq
  \frac{1}{\sum_p v_{\mathrm{eff},p}^{-1}}
  \quad\text{and}\quad
  \frac{1}{N^2} \sum_p v_{\mathrm{eff},p}
  \geq
  \frac{1}{\sum_p v_{\mathrm{eff},p}^{-1}},
\]
with equality if and only if $v_{\mathrm{eff},p}$ are all equal.
This yields the variance inequality.
\end{proof}

\begin{theorem}[Level 1: Hierarchical Pooling and Shrinkage]
\label{thm:level1-shrinkage}
Suppose that persona-level variance estimates $\widehat{v}_p$ are noisy for
small $n_p$. Let $\widehat{v}_{c}$ be a cluster-level pooled trace-variance
estimate and consider a shrinkage form
\[
  \widehat{v}_{\mathrm{eff},p}
  = (1 - \lambda_p)\widehat{v}_p + \lambda_p \widehat{v}_{c(p)},
  \quad 0 \leq \lambda_p \leq 1.
\]
Then, under suitable normality and exchangeability assumptions on the
$\{\widehat{v}_p\}$, the joint risk of the shrinkage estimators
$\{\widehat{v}_{\mathrm{eff},p}\}$ can be strictly smaller than that of the
non-pooled estimators $\{\widehat{v}_p\}$ in the sense of mean squared error
across personas. Consequently, OWB weights built from
$\widehat{v}_{\mathrm{eff},p}$ are, in general, preferable to weights built
from $\widehat{v}_p$ alone.
\end{theorem}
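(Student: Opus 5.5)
To prove Theorem~\ref{thm:level1-shrinkage}, I will first make the vague hypotheses precise and then reduce to a James--Stein or Efron--Morris shrinkage computation. Concretely, I work conditionally on the cluster structure and model, within a cluster $c$ with $m_c\ge 4$ personas,
\[
  \widehat v_p \mid v_p \sim \mathcal N(v_p,\tau^2),\qquad v_p \stackrel{\text{iid}}{\sim}\mathcal N(\nu_c,\omega^2),
\]
with independence across $p$. This is the ``normality and exchangeability'' assumption; a log-transform version $\log\widehat v_p$ is an alternative that better respects positivity. I will take $\widehat v_c=\bar{\widehat v}_c=m_c^{-1}\sum_{p\in c}\widehat v_p$ as the cluster pooled estimate, and the loss is $\sum_p \mathbb E(\widehat v_{\mathrm{eff},p}-v_p)^2$, summed over personas.

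The first step is the oracle computation, with a fixed common $\lambda_p=\lambda$ and $\widehat v_c$ first replaced by the true $\nu_c$. The risk is
\[
  (1-\lambda)^2\tau^2+\lambda^2\omega^2 \quad \text{per persona},
\]
which is minimized at $\lambda^\ast=\tau^2/(\tau^2+\omega^2)$. The minimal value $\tau^2\omega^2/(\tau^2+\omega^2)$ is strictly less than $\tau^2$ whenever $\tau^2>0$. Since the derivative in $\lambda$ at $0$ equals $-2\tau^2<0$, even any small fixed $\lambda>0$ strictly beats $\lambda=0$. This already gives the ``can be strictly smaller'' claim in a Bayes-risk sense.

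The second step removes the oracle by using the pooled mean $\bar{\widehat v}_c$. The estimator becomes $(1-\lambda)\widehat v_p+\lambda\bar{\widehat v}_c$, and I will recompute the risk using $\operatorname{Cov}(\widehat v_p,\bar{\widehat v}_c)$. The cross term contributes an $O(1/m_c)$ correction, so strict improvement holds for $\lambda\in(0,\lambda_{\max})$ with an explicit $\lambda_{\max}$ that is positive as long as $m_c\ge2$. For a frequentist statement that does not require the prior on $v_p$, I will instead use the data-driven Efron--Morris choice
\[
  \widehat\lambda=\min\Bigl\{1,\ \frac{(m_c-3)\tau^2}{\sum_{p\in c}(\widehat v_p-\bar{\widehat v}_c)^2}\Bigr\}.
\]
I then invoke Stein's unbiased risk estimate to obtain risk $\le m_c\tau^2$ for every $(v_p)_{p\in c}$, with strict inequality when $m_c\ge4$, and the positive-part truncation only lowers the risk further. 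Summing over clusters gives the joint statement.

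The third step is the ``consequently'' clause, which I will interpret via a delta-method argument. The feasible weights $\widehat w_p\propto \widehat v_{\mathrm{eff},p}^{-1}$ differ from the ideal $w_p^\ast$ of Theorem~\ref{thm:level1-gls}. By a second-order expansion, the excess variance $\operatorname{Var}(\widehat\mu_j)-\operatorname{Var}(\widehat\mu_j^{\mathrm{OWB(ideal)}})$ is approximately a positive quadratic form in $(\widehat v_{\mathrm{eff},p}-v_p)$, with coefficients bounded by constants depending on $\min_p v_p$. A smaller MSE of the variance estimates therefore yields a smaller leading-order efficiency loss. This gives only the ``in general preferable'' conclusion, not an exact inequality.

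I expect the main obstacle to be the second step. Known $\tau^2$ and Gaussianity are unrealistic for variance estimates, which are scaled $\chi^2$ and heteroskedastic in $n_p$. I will first state the result under the Gaussian idealization. If time permits, I will extend it to unequal $\tau_p^2$ by restricting to the oracle-$\lambda$ comparison, where strict improvement for small $\lambda>0$ follows by the same derivative argument.
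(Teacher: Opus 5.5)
The paper does not prove this theorem. Its only support is the Remark after it, which appeals by analogy to James--Stein and Efron--Morris shrinkage. Your Steps~1--2 are a correct, explicit instantiation of exactly that analogy, under a Gaussian idealization that the statement's ``suitable normality and exchangeability assumptions'' allow. The fixed-$\lambda$ computation in Step~2 closes. The per-persona Bayes risk of $(1-\lambda)\widehat v_p+\lambda\bar{\widehat v}_c$ is
\[
  \tau^2-2\lambda\Bigl(1-\tfrac{1}{m_c}\Bigr)\tau^2+\lambda^2\Bigl(1-\tfrac{1}{m_c}\Bigr)(\tau^2+\omega^2),
\]
so $\lambda_{\max}=\min\{1,2\tau^2/(\tau^2+\omega^2)\}$ for $m_c\ge 2$. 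The positive-part Lindley/Efron--Morris rule with $m_c-3$ dominates for $m_c\ge 4$, as you say. That settles the ``can be strictly smaller'' claim.

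The gap is Step~3. Treat $\widehat w$ as independent of the persona means (as holds under within-persona normality). Then the excess variance is exactly
\[
  c_j\sum_p v_p(\widehat w_p-w_p^\ast)^2
  \approx c_j V^\ast\sum_p w_p^\ast(\delta_p-\bar\delta)^2,
  \qquad
  \delta_p=\frac{\widehat v_{\mathrm{eff},p}}{v_p}-1,\quad
  \bar\delta=\sum_q w_q^\ast\delta_q,
\]
with $V^\ast=(\sum_q v_q^{-1})^{-1}$. This is a weighted dispersion of \emph{relative} errors. It ignores common rescaling and weights persona $p$'s absolute error by roughly $v_p^{-3}$. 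Bounding this form by $C\sum_p(\widehat v_{\mathrm{eff},p}-v_p)^2$ gives only an upper bound, not monotonicity in summed MSE. Moreover, shrinkage toward the cluster mean puts its largest relative bias, $\lambda(\bar v_c-v_p)/v_p$, precisely on the low-variance, high-weight personas.

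Here is a concrete failure in your own frequentist model. Take one cluster with $m_c=1000$, $v_{p_0}=0.1$, $v_p=1$ otherwise, and $\tau=0.01$. Your Efron--Morris $\widehat\lambda\approx 0.11$ lowers the summed MSE from $0.1$ to about $0.089$. Yet $\widehat v_{\mathrm{eff},p_0}\approx 0.2$, which halves the weight of the most precise persona. The excess variance becomes about an order of magnitude larger than with the raw $\widehat v_p$.

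So the ``consequently'' clause does not follow from Steps~1--2. You would need a dominance result in the loss displayed above, e.g.\ shrinking $\log\widehat v_p$ and controlling the $w^\ast$-weighted dispersion. Otherwise, present the clause explicitly as a heuristic, which is all the paper itself does.
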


\begin{remark}
Theorem~\ref{thm:level1-shrinkage} is conceptually analogous to classical
James--Stein and Efron--Morris shrinkage results: pooling variance (or
precision) estimates across a group in an empirical Bayes fashion can
improve overall mean squared error, especially when individual sample sizes
$n_p$ are small.
\end{remark}

\section{Level 2 Theorems: Hierarchical Normal Model}
\label{sec:level2}

\begin{theorem}[Level 2: Ideal OWB as BLUE and Conditional Posterior Mean]
\label{thm:level2-ideal}
Assume the hierarchical model~\eqref{eq:hierarchical-model} with known
variance components $(\sigma_\alpha^2,\sigma_\gamma^2,\sigma^2)$ and known
cluster assignments $c(\cdot)$.
Then for each persona $p$,
\[
  \operatorname{Var}(\bar{\beta}_{p,j}) = v_p
  \coloneqq \sigma_\alpha^2 + \sigma_\gamma^2 + \frac{\sigma^2}{n_p}.
\]
Define $\widehat{\mu}_j^{\mathrm{OWB(ideal)}}$ as in
\[
  \widehat{\mu}_j^{\mathrm{OWB(ideal)}}
  =
  \sum_{p=1}^N w_p^\ast\, \bar{\beta}_{p,j},
  \qquad
  w_p^\ast
  =
  \frac{v_p^{-1}}{\sum_{q=1}^N v_q^{-1}},
\]
with these $v_p$.
Then:
\begin{enumerate}
  \item (BLUE)
  $\widehat{\mu}_j^{\mathrm{OWB(ideal)}}$ is the Gauss--Markov best
  linear unbiased estimator of $\mu_j$ within the class
  $\mathcal{L} = \{\sum_p w_p \bar{\beta}_{p,j} : \sum_p w_p = 1\}$.
  \item (Bayesian interpretation)
  For fixed variances $\{v_p\}$, under a flat prior (or a normal prior with
  infinite variance) on $\mu_j$, the posterior mean
  $\mathbb{E}[\mu_j \mid \{\bar{\beta}_{p,j}\}, \{v_p\}]$ coincides
  exactly with $\widehat{\mu}_j^{\mathrm{OWB(ideal)}}$.
  When inverse-gamma priors are placed on the variance components
  $(\sigma_\alpha^2,\sigma_\gamma^2,\sigma^2)$, the conditional posterior
  mean of $\mu_j$ given these components retains the same inverse-variance
  weighted form.
\end{enumerate}
\end{theorem}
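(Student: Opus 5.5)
The plan is to reduce both claims to Theorem~\ref{thm:level1-gls} by checking that the hierarchical model~\eqref{eq:hierarchical-model} puts the persona means $\bar{\beta}_{p,j}$ in the setting of Assumption~\ref{ass:level1} with $c_j = 1$ and $v_{\mathrm{eff},p} = v_p$. Averaging~\eqref{eq:hierarchical-model} over rounds gives
\[
  \bar{\beta}_{p,j} = \mu_j + \alpha_p + \gamma_{c(p)} + \bar{\varepsilon}_{p,j},
  \qquad
  \bar{\varepsilon}_{p,j} = \frac{1}{n_p}\sum_{r=1}^{n_p}\varepsilon_{p,r,j}.
\]
The three noise terms are mutually independent and centred, so $\mathbb{E}[\bar{\beta}_{p,j}] = \mu_j$. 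Moreover $\operatorname{Var}(\bar{\varepsilon}_{p,j}) = \sigma^2/n_p$, and therefore $\operatorname{Var}(\bar{\beta}_{p,j}) = \sigma_\alpha^2 + \sigma_\gamma^2 + \sigma^2/n_p = v_p$. This is the displayed variance formula, and it also verifies items~1 and~3 of Assumption~\ref{ass:level1}.

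The step I expect to be the main obstacle is item~2, independence across personas. For $p \neq q$,
\[
  \operatorname{Cov}(\bar{\beta}_{p,j}, \bar{\beta}_{q,j}) = \sigma_\gamma^2\,\mathbf{1}\{c(p) = c(q)\},
\]
because two personas in the same cluster share $\gamma_c$. So the unconditional covariance matrix is block-diagonal, not diagonal, and the true GLS weights would differ from $w_p^\ast$. I would handle this in one of two ways and state which hypothesis is in force.

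\begin{enumerate}
  \item Condition on the cluster effects $\{\gamma_c\}$, so that they are absorbed into the mean, or assume that each cluster contains a single persona. In either case the persona means are uncorrelated and the stated $v_p$ is the relevant variance. In the conditional version $\sigma_\gamma^2$ should be dropped from $v_p$, and I would note this explicitly.
  \item Treat the statement as optimality among estimators that use the scalar proxies $v_p$ under a working-independence covariance, as in Assumption~\ref{ass:level1}.
\end{enumerate}

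Once diagonality is granted, part~1 (BLUE) follows verbatim from Theorem~\ref{thm:level1-gls}(1) with $c_j = 1$. The underlying argument is the Lagrange minimisation of $\sum_p w_p^2 v_p$ subject to $\sum_p w_p = 1$, and unbiasedness of every member of $\mathcal{L}$ is immediate from $\sum_p w_p = 1$.

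For part~2 I would work under the same diagonal-covariance setting with the $v_p$ fixed, and assume normality, which holds because $\alpha_p$, $\gamma_c$ and $\varepsilon_{p,r,j}$ are Gaussian. The likelihood of $\mu_j$ is then proportional to $\exp\bigl(-\tfrac12\sum_p (\bar{\beta}_{p,j}-\mu_j)^2/v_p\bigr)$. Multiplying by a flat prior and completing the square in $\mu_j$ gives a normal posterior with precision $\sum_p v_p^{-1}$ and mean $\sum_p v_p^{-1}\bar{\beta}_{p,j}/\sum_q v_q^{-1}$, which is exactly $\widehat{\mu}_j^{\mathrm{OWB(ideal)}}$. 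Equivalently, one can take a $\mathcal{N}(m,\tau^2)$ prior and let $\tau^2 \to \infty$: the prior-precision term $\tau^{-2}$ vanishes from both the numerator and the denominator.

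For the inverse-gamma remark, I would observe that conditioning on $(\sigma_\alpha^2, \sigma_\gamma^2, \sigma^2)$ fixes every $v_p$. By conditional independence the priors on the variance components then contribute only a factor free of $\mu_j$, so the full conditional of $\mu_j$ is the same Gaussian as above. Its mean therefore keeps the inverse-variance weighted form, with $v_p$ evaluated at the conditioned values.
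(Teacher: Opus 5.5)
Your argument follows the same route as the paper's sketch. Part~1 reduces to Theorem~\ref{thm:level1-gls} with $c_j = 1$. Part~2 follows by completing the square under a flat prior, or a $\mathcal{N}(m,\tau^2)$ prior with $\tau^2\to\infty$. The inverse-gamma clause is handled by conditioning on the variance components.

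Where you depart from the paper is the step that matters. The paper's sketch simply asserts that the $\bar\beta_{p,j}$ are independent across $p$. You are right that under~\eqref{eq:hierarchical-model} this is false as soon as a cluster contains two personas and $\sigma_\gamma^2>0$. This is not a removable technicality, because the conclusion itself fails in general.

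Take three personas with equal $n_p$, so that $d_p \coloneqq \sigma_\alpha^2+\sigma^2/n_p$ is constant and $w^\ast = (\tfrac13,\tfrac13,\tfrac13)$. Let $c(1)=c(2)\neq c(3)$ and set $d_p = \sigma_\gamma^2 = 1$. The covariance matrix of the three persona means then has $2$ on the diagonal and $1$ in positions $(1,2)$ and $(2,1)$. Hence $\operatorname{Var}(\widehat\mu_j^{\mathrm{OWB(ideal)}}) = 8/9$, while the weights $(\tfrac27,\tfrac27,\tfrac37)\in\mathcal{L}$ give variance $6/7$. These are also the flat-prior posterior-mean weights under the correct Gaussian likelihood. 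So an extra hypothesis is necessary, not merely convenient, and the gap you located is a gap in the paper's proof.

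Of your repairs, only one works.

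\begin{enumerate}
  \item Conditioning on $\{\gamma_c\}$ does not place you in Assumption~\ref{ass:level1}. Conditionally, $\mathbb{E}[\bar\beta_{p,j}\mid\{\gamma_c\}] = \mu_j+\gamma_{c(p)}$, so item~1 fails. Every member of $\mathcal{L}$ then has conditional mean $\mu_j+\sum_p w_p\gamma_{c(p)}\neq\mu_j$. The resulting inverse-variance weights are built from $d_p$ rather than the stated $v_p$. The flat-prior posterior mean would also involve $\bar\beta_{p,j}-\gamma_{c(p)}$ rather than $\bar\beta_{p,j}$.
  \item The working-independence reading proves optimality with respect to a misspecified covariance. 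That is not the Gauss--Markov claim.
  \item The hypothesis to adopt is the other half of your first option: each cluster contains exactly one persona, or $\sigma_\gamma^2 = 0$. Then the $\gamma_{c(p)}$ are i.i.d.\ across personas, and the persona means are independent $\mathcal{N}(\mu_j, v_p)$ with exactly the displayed $v_p$. Both of your arguments then go through verbatim.
\end{enumerate}

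For reference, in the general block-diagonal case, applying Sherman--Morrison within each cluster gives the true BLUE (and flat-prior posterior mean) weights
\[
  w_p \propto \frac{d_p^{-1}}{1+\sigma_\gamma^2\sum_{q:\,c(q)=c(p)} d_q^{-1}}.
\]
These coincide with $w_p^\ast \propto v_p^{-1} = (d_p+\sigma_\gamma^2)^{-1}$ for singleton clusters, but not in general.
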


\begin{proof}[Proof sketch]
Given $(\sigma_\alpha^2,\sigma_\gamma^2,\sigma^2)$ and the model
\eqref{eq:hierarchical-model}, the persona means satisfy
\[
  \bar{\beta}_{p,j} \mid \mu_j
  \sim \mathcal{N}(\mu_j, v_p),
  \quad v_p = \sigma_\alpha^2 + \sigma_\gamma^2 + \sigma^2/n_p,
\]
independently across $p$.
Applying Theorem~\ref{thm:level1-gls} with
$v_{\mathrm{eff},p} = v_p$ yields the BLUE property.
For the Bayesian claim, a conjugate normal prior on $\mu_j$ with variance
$\tau^2$ leads to a posterior mean that is precisely the GLS estimator
with weights proportional to $v_p^{-1}$; letting $\tau^2 \to \infty$
recovers the flat prior case.
Conditional on the variance components, the addition of inverse-gamma
priors preserves this form.
\end{proof}

\begin{theorem}[Level 2: Feasible OWB as FGLS + EB + Weighted Bootstrap]
\label{thm:level2-feasible}
In practice the variance components in~\eqref{eq:hierarchical-model} are
unknown. Let $\widehat{v}_{\mathrm{eff},p}$ be hierarchical pooled trace
variance estimators (persona $\to$ cluster $\to$ global), and define the
feasible OWB estimator as in
\[
  \widehat{\mu}_j^{\mathrm{OWB}}
  =
  \sum_{p=1}^N \widehat{w}_p\, \bar{\beta}_{p,j},
  \qquad
  \widehat{w}_p
  =
  \frac{\widehat{v}_{\mathrm{eff},p}^{-1}}
       {\sum_{q=1}^N \widehat{v}_{\mathrm{eff},q}^{-1}}.
\]
The uncertainty quantification and imputation are based on a weighted
bootstrap using the probability vector
$\widehat{\boldsymbol{w}} = (\widehat{w}_1,\dots,\widehat{w}_N)$,
with no uniform sampling (never $p=\mathtt{None}$).

Assume the following regularity conditions:
\begin{enumerate}
  \item (Consistency up to scale)
  $\widehat{v}_{\mathrm{eff},p} \xrightarrow{p} c\, v_p$ for some
  constant $c > 0$, uniformly in $p$.
  \item (Weight regularity)
  $\inf_p \widehat{w}_p \geq \delta_N / N$ where $\delta_N \to \delta > 0$
  as $N \to \infty$, or weaker moment conditions such as
  $\mathbb{E}[(\widehat{w}_p \log \widehat{w}_p)^2] < \infty$, can be used instead.
  \item (Lindeberg-type conditions)
  Standard Lindeberg-type conditions for the centered variables
  $\bar{\beta}_{p,j} - \mu_j$ ensuring asymptotic normality of linear
  combinations.
\end{enumerate}
Then:
\begin{enumerate}
  \item (FGLS efficiency)
  The feasible OWB estimator $\widehat{\mu}_j^{\mathrm{OWB}}$ is
  asymptotically efficient among linear estimators in the sense of
  feasible generalized least squares (FGLS).
  \item (Empirical Bayes shrinkage)
  The hierarchical pooling of variance estimates is an empirical Bayes
  estimator for the precision parameters $1/v_p$, sharing the empirical
  Bayes shrinkage structure with the precision-parameter estimators of
  Efron and Morris~(1973), and thereby providing robustness when many
  personas have small $n_p$.
  \item (Weighted bootstrap consistency)
  The weighted bootstrap with the data-dependent but fixed weights
  $\widehat{\boldsymbol{w}}$ is asymptotically valid for approximating the
  sampling distribution of $\widehat{\mu}_j^{\mathrm{OWB}}$ under the
  above conditions.
\end{enumerate}
\end{theorem}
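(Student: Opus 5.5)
The plan is to prove the three claims in order. Claim~1 reduces to Theorem~\ref{thm:level1-gls} plus a perturbation argument. Claim~2 is a conjugacy computation. Claim~3 is a conditional Lindeberg CLT for the resampled mean, whose conditional variance I will match to the target variance. Two preliminary observations drive everything.
\begin{itemize}
  \item The weights $\widehat w_p$ are invariant under rescaling all $\widehat v_{\mathrm{eff},p}$ by a common factor. Condition~1 (consistency up to scale) therefore gives $\max_p |\widehat w_p / w_p^\ast - 1| \xrightarrow{p} 0$, where $w_p^\ast$ are the ideal weights of Theorem~\ref{thm:level2-ideal}; the unknown constant $c$ never enters.
  \item Under the normal model~\eqref{eq:hierarchical-model}, the within-persona sample covariance $\widehat\Sigma_p$ is independent of the persona mean $\bar{\boldsymbol\beta}_p$. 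If the pooled $\widehat v_{\mathrm{eff},p}$ are built only from within-persona residuals, then $\widehat{\boldsymbol w}$ is independent of $\{\bar\beta_{p,j}\}$. I will impose this as part of the construction, because it lets me condition on the weights.
\end{itemize}

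\textbf{Claim~1 (FGLS efficiency).} Since $\sum_p(\widehat w_p - w_p^\ast)=0$, I write
\[
  \widehat\mu_j^{\mathrm{OWB}} - \widehat\mu_j^{\mathrm{OWB(ideal)}}
  = \sum_p (\widehat w_p - w_p^\ast)(\bar\beta_{p,j}-\mu_j).
\]
Conditionally on $\widehat{\boldsymbol w}$, this difference has mean zero. Its conditional variance is
\[
  \sum_p (\widehat w_p - w_p^\ast)^2 v_p
  \le \max_p\Bigl|\tfrac{\widehat w_p}{w_p^\ast}-1\Bigr|^2 \sum_p (w_p^\ast)^2 v_p
  = o_p(1)\cdot \operatorname{Var}\bigl(\widehat\mu_j^{\mathrm{OWB(ideal)}}\bigr).
\]
By Chebyshev, the difference is therefore negligible at the efficient rate $(\sum_p v_p^{-1})^{-1/2}$. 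Combined with the BLUE property of the ideal estimator (Theorem~\ref{thm:level2-ideal}, via Theorem~\ref{thm:level1-gls}), this gives asymptotic equivalence with GLS, which is the FGLS statement.

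One caveat must be addressed here. The shared cluster effect $\gamma_{c(p)}$ correlates personas within a cluster, which violates the independence used in Theorem~\ref{thm:level1-gls}. I would handle this in one of two ways: either adopt Assumption~\ref{ass:level1}(2) explicitly as a working model, or rerun the argument at the cluster level, treating cluster-aggregated means as the independent units and letting the number of clusters grow.

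\textbf{Claim~2 (empirical Bayes).} Under normality, $d_p \widehat v_p / v_p \sim \chi^2_{d_p}$ up to the trace-proxy approximation. I place a conjugate gamma prior on the precision $1/v_p$, equivalently an inverse-gamma prior $(a_c,b_c)$ on $v_p$ within cluster $c$. The posterior mean of $v_p$ is then the convex combination
\[
  (1-\lambda_p)\widehat v_p + \lambda_p \tfrac{b_c}{a_c-1},
  \qquad
  \lambda_p = \frac{a_c-1}{a_c-1+d_p/2}.
\]
Estimating $(a_c,b_c)$ by cluster-level moments, with fallback to global moments when a cluster is too small, reproduces exactly the shrinkage form of Theorem~\ref{thm:level1-shrinkage}. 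It also shows that $\lambda_p \to 1$ as $n_p$ becomes small. I would state this as a structural identification in the spirit of Efron--Morris rather than claim a sharp risk bound.

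\textbf{Claim~3 (bootstrap consistency).} Condition on the data and the weights. Draw $N$ indices i.i.d.\ from $\widehat{\boldsymbol w}$ and form the resampled mean $\mu^\ast$. Then $\mathbb E^\ast[\mu^\ast] = \widehat\mu_j^{\mathrm{OWB}}$ and
\[
  \operatorname{Var}^\ast(\mu^\ast) = \frac1N\sum_p \widehat w_p\bigl(\bar\beta_{p,j}-\widehat\mu_j^{\mathrm{OWB}}\bigr)^2 .
\]
A weak law gives $\sum_p \widehat w_p(\bar\beta_{p,j}-\mu_j)^2 \approx \sum_p w_p^\ast v_p = N/\sum_q v_q^{-1}$, using Condition~2 to control the weights. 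Hence $\operatorname{Var}^\ast(\mu^\ast)$ matches $(\sum_q v_q^{-1})^{-1}$, the asymptotic variance from Claim~1.

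Conditional asymptotic normality then follows from a Lindeberg CLT for the i.i.d.\ triangular array of resampled terms. Condition~2 bounds $\max_p \widehat w_p$ so that no single persona dominates, and Condition~3 supplies the moment control. Pólya's theorem upgrades the convergence to the Kolmogorov distance between bootstrap and sampling distributions.

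\textbf{Main obstacle.} I expect the main difficulties to be the data dependence of the weights in Claim~1, which the independence device above is meant to resolve, and the cluster-induced correlation. I would try the cluster-level reduction first. For the bootstrap, the delicate point is verifying the variance match above without assuming homogeneous $v_p$; I would try to justify the weighted law of large numbers uniformly using the $\inf_p \widehat w_p$ bound.
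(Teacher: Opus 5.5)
The paper states this theorem without proof, so your argument has to stand on its own. Its central device, independence of the weights from the means, cannot hold together with Condition~1 under model~\eqref{eq:hierarchical-model}, except in a degenerate case.

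\textbf{The independence device.} Under the model, the within-persona residuals are $\beta_{p,r,j}-\bar\beta_{p,j}=\varepsilon_{p,r,j}-n_p^{-1}\sum_{s=1}^{n_p}\varepsilon_{p,s,j}$. The terms $\alpha_p+\gamma_{c(p)}$ cancel, so any $\widehat v_{\mathrm{eff},p}$ built only from these residuals, however it is pooled, sees $\sigma^2$ alone and converges to something like $\sigma^2/n_p$. Condition~1 instead requires convergence to $c\,v_p=c(\sigma_\alpha^2+\sigma_\gamma^2+\sigma^2/n_p)$ with one common $c$. That is impossible once the $n_p$ vary and $\sigma_\alpha^2+\sigma_\gamma^2>0$. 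The same conflation appears in Claim~2: the quantity that is $\chi^2_{d_p}$ is $d_p\widehat v_p/(\sigma^2/n_p)$, not $d_p\widehat v_p/v_p$.

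So your construction and Condition~1 coexist only when $v_p\propto 1/n_p$. In the genuine case the weights must use the between-persona spread of the $\bar\beta_{p,j}$ themselves. Then $\widehat{\boldsymbol w}$ depends on the means, and the step ``conditionally on $\widehat{\boldsymbol w}$ the difference has mean zero'' fails. Without that step, Condition~1 only gives $\bigl|\sum_p(\widehat w_p-w_p^\ast)(\bar\beta_{p,j}-\mu_j)\bigr|\le\max_p|\widehat w_p/w_p^\ast-1|\,\sum_p w_p^\ast|\bar\beta_{p,j}-\mu_j|=o_p(1)$. You need $o_p\bigl((\sum_q v_q^{-1})^{-1/2}\bigr)=o_p(N^{-1/2})$.

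\textbf{The missing idea} is the standard FGLS expansion:
\begin{itemize}
\item Write $\widehat w_p=w_p(\widehat\theta)$ for a finite-dimensional pooled parameter, e.g.\ $\widehat\theta=(\widehat\sigma_\alpha^2+\widehat\sigma_\gamma^2,\widehat\sigma^2)$ or finitely many cluster-level values, and expand about $\theta$.
\item The linear term is controlled because $\sum_p\partial_\theta w_p(\theta)(\bar\beta_{p,j}-\mu_j)=O_p(N^{-1/2})$: its coefficients are non-random and $O(1/N)$.
\item The quadratic remainder is $O_p(\|\widehat\theta-\theta\|^2)$, so you need a rate such as $\widehat\theta-\theta=o_p(N^{-1/4})$.
\end{itemize}
Neither the finite-dimensional structure nor the rate follows from Condition~1, so both must be stated as hypotheses.

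\textbf{Cluster correlation.} This cannot be routed around by a cluster-level reduction. With $\sigma_\gamma^2>0$ and clusters of size at least two, the variance of $\sum_p w_p^\ast\bar\beta_{p,j}$ contains the extra term $\sigma_\gamma^2\sum_{p\neq q,\,c(p)=c(q)}w_p^\ast w_q^\ast$.
\begin{itemize}
\item With bounded cluster sizes, this term is of the same order as $(\sum_q v_q^{-1})^{-1}$, while your persona-level bootstrap variance still converges to $(\sum_q v_q^{-1})^{-1}$.
\item With a bounded number of clusters, the term is of order one, while the bootstrap variance is $O_p(1/N)$.
\end{itemize}
So Claim~3 is false for the persona-level scheme in this setting. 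A cluster bootstrap would be a different procedure, and the GLS weights in Claim~1 are no longer $\propto v_p^{-1}$. Cross-persona independence must therefore be an explicit hypothesis, not an optional working model.

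\textbf{Condition~2.} It bounds the \emph{smallest} weight and only gives $\max_p\widehat w_p\le 1-(N-1)\delta_N/N$, so it does not prevent one persona from dominating. The no-dominance control your Lindeberg step needs must come from elsewhere:
\begin{itemize}
\item from Condition~1 together with $\max_p v_p/\min_p v_p=O(1)$, which is automatic in model~\eqref{eq:hierarchical-model} when $\sigma_\alpha^2+\sigma_\gamma^2>0$; or
\item from $\max_p n_p/\sum_q n_q\to0$ in the degenerate case.
\end{itemize}

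\textbf{What works.} Your variance match in Claim~3 is easier than you expect. Under independence and normality, each $w_p^\ast(\bar\beta_{p,j}-\mu_j)^2$ has mean exactly $(\sum_q v_q^{-1})^{-1}$ and variance $2(\sum_q v_q^{-1})^{-2}$. Chebyshev then gives the weighted law of large numbers with no homogeneity assumption on the $v_p$.
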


\begin{remark}[On James--Stein admissibility]
The empirical Bayes interpretation of our hierarchical variance pooling
is structural rather than formal: while the resulting precision estimates
exhibit shrinkage toward cluster/global levels analogous to the
precision-parameter estimators of Efron--Morris (1973), we do not claim
admissibility in the classical James--Stein sense.
\end{remark}

\begin{remark}[On the weight lower-bound assumption]
\label{rem:weight-lower-bound}
The lower-bound condition on $\widehat{w}_p$ in
Theorem~\ref{thm:level2-feasible} is a convenient sufficient
assumption to ensure that no single persona is asymptotically
ignored by the estimator.
In practice, the hierarchical pooling
used to construct $\widehat{v}_{\mathrm{eff},p}$ keeps the smallest
weight well above $1/N^2$ in our empirical studies (not reported here),
and weaker moment conditions on the weights could be employed at the
cost of additional technical detail.
\end{remark}

\section{Implementation Proposition: Zero-NaN Guarantee}
\label{sec:zero-nan}

\begin{assumption}[Minimal Data Assumption]
\label{ass:min-data}
For each petal $j \in \{1,\dots,P\}$, there exists at least one persona
$p$ and round $r$ such that $\beta_{p,r,j}$ is finite (non-NaN).
In particular, if a petal $j$ has no finite observations at all, the
implementation is designed to raise an explicit error rather than silently
imputing.
\end{assumption}

\begin{proposition}[Zero-NaN Guarantee]
\label{prop:zero-nan}
Under Assumption~\ref{ass:min-data}, the OWB imputation algorithm---which
combines local OWB (within-round donor pools), persona history,
cluster-level pooling, per-petal global pooling, and global archetype
confidence intervals in a hierarchical fallback chain---produces a final
imputed matrix in which no entries are NaN.
Every imputed value is
sampled from an empirical distribution built from observed data using an
explicit, non-uniform probability vector; uniform sampling and
numerical zero-filling (e.g.\ \texttt{p=None}, \texttt{np.nan\_to\_num})
are never used.
\end{proposition}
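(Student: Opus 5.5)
The proof is by induction over the entries of the data matrix along the hierarchical fallback chain. The argument has two parts. The first part shows that the chain always terminates at a level whose donor pool is nonempty and finite. The second part shows that every draw made from such a pool is a finite observed value, obtained with an explicit, strictly positive, normalized probability vector. I formalize the chain as an ordered list of levels $L_1,\dots,L_5$: local OWB within the round, persona history, cluster pool, per-petal global pool, and global archetype confidence interval. For each missing entry $(p,r,j)$, level $L_k$ defines a donor multiset $D_k(p,r,j)$ of finite values, together with a weight vector $\pi_k$ on $D_k$. The algorithm invokes $L_k$ only if $D_1,\dots,D_{k-1}$ are all empty. The key invariant is that every donor pool is built by filtering observed values through the predicate ``finite.'' Hence each $D_k$, whenever it is nonempty, contains only finite reals.

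First, I show the chain never falls through. By Assumption~\ref{ass:min-data}, for every petal $j$ there exist $p_0,r_0$ with $\beta_{p_0,r_0,j}$ finite. The per-petal global pool $D_4(p,r,j)$ is the set of all finite $\beta_{q,s,j}$ over all $q,s$, so it contains $\beta_{p_0,r_0,j}$ and is nonempty. Therefore, for every missing entry, the first nonempty level has index at most $4$. Level $L_5$ is only a confidence-interval refinement built from $\widehat\mu_j^{\mathrm{OWB}}$, which is finite because it is a finite convex combination of finite persona means computed over nonempty finite subsets. If the assumption fails for some $j$, the implementation raises an error, which is consistent with the statement.

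Second, I show each sampling step is well defined and non-uniform. At the selected level $k$, the weights are $\pi_k(d)\propto \widehat v_{\mathrm{eff},q(d)}^{-1}$, where $q(d)$ is the donor persona. To guarantee these are finite and positive, I use the pooling structure persona $\to$ cluster $\to$ global. If $\widehat v_q$ is undefined or zero, for example because there is a single observation, it is replaced by the cluster or global pooled value. The global value is positive and finite provided at least two distinct finite values exist; otherwise a fixed positive floor $\varepsilon>0$ is used. Hence every weight lies in $(0,\infty)$, the normalizing sum is positive and finite, and $\pi_k$ is a genuine probability vector passed explicitly to the sampler. A draw from a finite-support distribution over finite values is then finite. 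Since no code path invokes \texttt{p=None} or \texttt{np.nan\_to\_num}, which can be verified by inspecting the finite list of call sites, every imputed value is an observed finite datum. Observed entries are left unchanged and are finite by definition, so the completed matrix is NaN-free.

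The main obstacle is the positivity and finiteness of the weights at degenerate levels. Zero empirical variance would produce infinite precision, and an all-NaN persona would produce an undefined $\widehat v_p$; either would lead to NaN weights after normalization. I would handle this by stating explicitly the floor and fallback rule in the variance pooling. I would then prove a small lemma that $\widehat v_{\mathrm{eff},p}\in[\varepsilon,\infty)$ for all $p$ that contribute donors, and invoke that lemma in the second step.
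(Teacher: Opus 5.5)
Your proposal follows essentially the same argument as the paper's sketch---a finite-depth fallback chain in which any layer with a nonempty donor pool returns a finite value via an explicitly weighted one-dimensional bootstrap, with Assumption~\ref{ass:min-data} guaranteeing that the per-petal global pool is nonempty---and your extra lemma keeping $\widehat v_{\mathrm{eff},p}\in[\varepsilon,\infty)$ makes explicit the well-definedness of the weights, which the paper simply asserts as a design invariant (the floor $\varepsilon$ is your own implementation assumption; the paper does not specify it). One caveat, which the paper's sketch shares: you establish only that the probability vector is explicit and valid (never \texttt{p=None}), not that it is non-uniform in the literal sense, since with $\pi_k(d)\propto\widehat v_{\mathrm{eff},q(d)}^{-1}$ the persona-history level (where every donor has $q(d)=p$) or any single-donor pool gives exactly uniform weights, so ``non-uniform'' must be read in that weaker sense.
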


\begin{proof}[Proof sketch]
The algorithm is constructed as a finite-depth hierarchical fallback:
\[
  \text{local} \to \text{persona} \to \text{cluster} \to
  \text{per-petal global} \to \text{global-all} \to
  \text{(error if no data)}.
\]
Each layer maintains the invariant that, if its donor pool is non-empty,
then it returns a finite value via a one-dimensional weighted bootstrap
on the donor pool with explicit probabilities.
Under Assumption~\ref{ass:min-data}, for any given petal $j$ there is at least
one finite observation, hence some layer in the chain has a non-empty
donor pool.
Thus, by induction on the depth of the fallback chain,
every missing cell is eventually assigned a finite value.
The only case in which the chain fails is when all entries in the data
are NaN, in which case the algorithm is designed to raise an explicit
error rather than silently imputing.
\end{proof}

\section*{Acknowledgments}

The author gratefully acknowledges the use of large language models as
writing and proof-assistance tools during the preparation of this manuscript.
In particular:

\begin{itemize}
  \item \textbf{ChatGPT-5.1 Thinking (by OpenAI)} --- used to explore
  alternative formulations, check algebraic steps, and stress-test the
  statements of the theorems and propositions through iterative
  clarification and counterexample-style questioning.

  \item \textbf{Grok 4.1 (built by xAI)} --- used as an exploratory engine
  to surface connections to the Gauss--Markov and Lindley--Smith theorems,
  to refine the Level~1/Level~2 framework, and to suggest preliminary proof
  sketches and presentation strategies.
\end{itemize}

Both systems were employed strictly as tools for drafting, exploration,
and proof assistance. All mathematical models, theorems, and proofs were
ultimately formulated, checked, and are fully endorsed by the human author,
who bears sole responsibility for any remaining errors or inaccuracies.

\end{document}